\DeclareSymbolFont{cyrletters}{OT2}{wncyr}{m}{n}
\DeclareMathSymbol{\Sha}{\mathalpha}{cyrletters}{"58}
\theoremstyle{plain}
\newtheorem{theorem}{Theorem}[section]
\newtheorem{lemma}[theorem]{Lemma}
\newtheorem{proposition}[theorem]{Proposition}
\newtheorem*{conjecture*}{Conjecture}
\newtheorem{definition}[theorem]{Definition}
\numberwithin{equation}{section}
\let\non\nonumber
\newcommand{\bea}{\begin{eqnarray}}
\newcommand{\eea}{\end{eqnarray}}
\newcommand{\be}{\begin{equation}}
\newcommand{\ee}{\end{equation}}
\newcommand{\sgn}{\mathrm{sgn}}
\newcommand{\noi}{\noindent}
\newcommand{\bP}{\mathbb{P}}
\newcommand{\half}{\textstyle{\frac{1}{2}}}
\newcommand{\CC}{\mathcal{C}}
\newcommand{\CM}{\mathcal{M}}
\newcommand{\CN}{\mathcal{N}}
\newcommand{\CO}{\mathcal{O}}
\subjclass[2000]{14J60,  14D21, 14N35} 
\keywords{sheaves, moduli spaces}  
\begin{document}

\title[The Betti numbers of the moduli space of stable sheaves of rank 3 on $\bP^2$]{The Betti numbers of the moduli space of stable sheaves of rank 3 on $\bP^2$}

\author{Jan Manschot}
\address{Institut de Physique Th\'eorique\\
 CEA Saclay, CNRS-URA 2306\\
91191 Gif sur Yvette, France}
\email{jan.manschot@cea.fr}

\begin{abstract}
\baselineskip=18pt
\noi This article computes the generating functions of the Betti numbers of the moduli
space of stable sheaves of rank 3 on $\mathbb{P}^2$ and its blow-up
$\mathbb{\tilde P}^2$. Wall-crossing is used to obtain the Betti numbers for
$\mathbb{\tilde P}^2$. These can be derived equivalently using flow
trees, which appear in the physics of BPS-states. The Betti numbers for $\mathbb{P}^2$ follow from those for
$\mathbb{\tilde P}^2$ by the blow-up formula. 
The generating functions are expressed in terms of modular functions and indefinite theta
functions. 
\end{abstract}
\maketitle

\baselineskip=19pt

%%%%%%%%%%%%%%%%%%%%%%%%%%%%%%%%%%%%%%%%%%%%%%%%%%%%%%%%%%%%%%%%%%%%%%%%%%%%%%

\section{Introduction}
The Euler and Betti numbers of moduli spaces of stable sheaves on
complex surfaces have received much attention in the past, both in
mathematics and physics. Computation of the generating functions
of these numbers is notoriously difficult; a generic result is only
known for rank 1 sheaves \cite{Gottsche:1990}. Yoshioka
\cite{Yoshioka:1994, Yoshioka:1995} has computed the generating 
functions of the Betti numbers for rank 2 sheaves on a ruled
surface using wall-crossing. The generating functions for rank 
2 sheaves on the projective plane $\mathbb{P}^2$ follow from these by
the blow-up formula \cite{Yoshioka:1994, Yoshioka:1995}. Some
qualitative differences appear for rank 3, in which case only three Poincar\'e
polynomials are known \cite{Yoshioka:1996}. Using a different
approach, namely toric geometry, Refs. \cite{Klyachko:1991,
  Kool:2009, weist:2009} have computed generating functions 
of the Euler numbers for sheaves of rank 2 and 3 on $\bP^2$.

The connection between sheaves and $\CN=4$ supersymmetric gauge theory relates the
Euler number of the moduli space to the supersymmetric index (or
BPS-invariant) \cite{Vafa:1994tf}. Another consequence of this connection is that the
$SL(2,\mathbb{Z})$ electric-magnetic duality group of gauge theory implies modular properties for the
generating functions of the Euler numbers \cite{Vafa:1994tf}, which was verified among
others for sheaves on $\bP^2$ with rank 1 and 2 \cite{Gottsche:1990, Yoshioka:1994}.  
Ref. \cite{Bringmann:2010sd} identified the modular properties for the
Betti numbers, which correspond to refined BPS-invariants
\cite{Dimofte:2009bv}. Although modularity has 
been proven useful for computations for rational surfaces \cite{Minahan:1998vr,
  Yoshioka:1998ti, Gottsche:1998}, mathematical justifications of
physical expectations, in particular a so-called holomorphic anomaly
\cite{Vafa:1994tf, Minahan:1998vr}, have been limited to rank $\leq$2 \cite{Gottsche:1990, Yoshioka:1994,
  Yoshioka:1998ti} since generating functions for $r>2$ were not known.    

This motivated the present work, which computes the generating functions of the 
Betti numbers for stable sheaves with rank 3 on the rationally ruled surface
$\mathbb{\tilde P}^2$ and on $\bP^2$. Using the results from \cite{Manschot:2009ia, Manschot:2010xp, Bringmann:2010sd},
the generating functions take a particularly compact form in terms of
modular functions and indefinite theta functions. The latter are
convergent sums over a subset of an indefinite lattice 
\cite{Gottsche:1996}. The lattice for rank 2 has
signature $(1,1)$ and the corresponding functions are well-studied in the literature
\cite{Gottsche:1996, Zwegers:2000}. Interestingly, the lattice for rank 3 has signature $(2,2)$
and the corresponding function is of a novel form. A detailed discussion
of the modular properties of this function will appear in a future article \cite{Bringmann:2011}.
 
The computations in this article rely on wall-crossing and the blow-up
formula, analogously to the computations by Yoshioka for rank
2, and can be extended to rank $>3$ if desired. To arrive at the generating functions, the (semi-primitive)
wall-crossing formula of \cite{Denef:2007vg, Dimofte:2009bv} is
applied to determine the change of the Betti numbers across a wall of marginal stability.
This wall-crossing formula is derived in physics both for
field theory \cite{Gaiotto:2008cd} and for supersymmetric black holes
in supergravity \cite{Denef:2007vg, Andriyash:2010qv}, which correspond to D$p$-branes (or sheaves) supported on $p$-cycles of a
Calabi-Yau threefold. The validity of the wall-crossing
formula for this article can be motivated by viewing the surface as a (rigid) divisor of a
Calabi-Yau threefold, and it is further confirmed by agreement of the
generating functions with older results in the mathematical literature \cite{Yoshioka:1996,
  Kool:2009, weist:2009}.  It is well-known that this wall-crossing
formula is equivalent with the mathematical wall-crossing formulas derived for Donaldson-Thomas
invariants \cite{Joyce:2008, Kontsevich:2008}. In fact, also
mathematical arguments exist that these formulas are
 applicable for invariants of moduli spaces of sheaves on $\tilde \bP^2$ since
$K^{-1}_{\tilde \bP^2}$ is numerically effective \cite{Joyce:2004, Kontsevich:2010px}. 

Wall-crossing for sheaves (or D4-branes) supported on divisors in
Calabi-Yau threefolds \cite{Manschot:2009ia} is another 
motivation for this paper. Considering the sheaves on surfaces without
the embedding into a Calabi-Yau simplifies the system and in this way
helps to understand the Calabi-Yau case. On the other hand supergravity gives a useful
complementary viewpoint, and suggests for example that 
the generating function for BPS-invariants can be computed  using enumeration of so-called flow trees
  \cite{Denef:2000nb}. This approach was taken in
  \cite{Manschot:2010xp}, and the present article provides an
  illustration and confirmation of this technique. Subsection
  \ref{subsec:flowtrees} gives a brief introduction to flow trees, however
  the discussion in this article is mostly phrased in terms of sheaves and characteristic classes,
  because the notion of a moduli space is most rigorously defined in
  this context. 
\\
\newline
The outline of the paper is as follows. Section \ref{sec:sheaves}
reviews the necessary properties of sheaves, including wall-crossing
and blow-up formulas. Subsection \ref{subsec:flowtrees} gives a brief
introduction to flow trees. Section \ref{sec:euler} computes the
Euler numbers of the moduli spaces for rank 2 and 3, followed by the computation of the
Betti numbers in Section \ref{sec:betti}. Appendix \ref{app:modfunctions}
lists various modular functions, which appear in the generating functions
of the Euler and Betti numbers. 

\section*{Acknowledgements}
\noi I would like to thank Emanuel Diaconescu, Lothar G\"ottsche and
Maxim Kontsevich for helpful and inspiring discussions, and IHES for hospitality.
This work is  partially supported  by ANR grant
BLAN06-3-137168.

\section{Sheaves}
\label{sec:sheaves}

\subsection{Sheaves and stability}
The Chern character of a sheaf $F$ on a surface $S$ is given by
ch$(F)=r(F)+c_1(F)+\frac{1}{2}c_1(F)^2-c_2(F)$ in terms of the
rank  $r(F)$ and its Chern classes $c_1(F)$ and $c_2(F)$. It is
convenient to parametrize a sheaf by ch$(F)$ since it is additive:
$\mathrm{ch}(F\oplus G) =\mathrm{ch}(F)+\mathrm{ch}(G)$. Define
$\Gamma:=(r,\mathrm{ch}_1,\mathrm{ch}_2)$. Other frequently occuring quantities are the determinant
$\Delta(F)=\frac{1}{r(F)}(c_2(F)-\frac{r(F)-1}{2r(F)}c_1(F)^2)$, and
$\mu(F)=c_1(F)/r(F)\in H^2(S,\mathbb{Q})$. 

Let $0\subset F_1 \subset F_2\subset \dots \subset F_s=F$ be a
filtration of the sheaf $F$. The quotients are denoted
by $E_i=F_i/F_{i-1}$ with $\Gamma_i=\Gamma(E_i)$. 

\begin{lemma}
With the above notation, the discriminant $\Delta(F)$ is given by
\be
\Delta(F)=\sum_{i=1}^s\frac{r(E_i)}{r(F)}\Delta(E_i)-\frac{1}{2r(F)}\sum_{i=2}^s
\frac{r(F_{i-1})\,r(F_i)}{r(E_i)} \left(\mu(F_{i-1})-\mu(F_i) \right)^2.\non
\ee
\end{lemma}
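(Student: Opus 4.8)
The plan is to turn the statement into a purely numerical identity satisfied by the additive invariant $\mathrm{ch}_2$ together with the weighted slopes $\mu(E_i)$, and then to recognise that identity as the classical ``variance decomposition'', which is proved by a one-line induction. First I would rewrite $\Delta$ in a shape adapted to additivity: since $\mathrm{ch}_2(F)=\tfrac12 c_1(F)^2-c_2(F)$, substituting $c_2(F)=\tfrac12 c_1(F)^2-\mathrm{ch}_2(F)$ into the definition of $\Delta$ gives
\be
\Delta(F)=\frac{1}{2}\,\mu(F)^2-\frac{\mathrm{ch}_2(F)}{r(F)},
\ee
where $\mu(F)^2$ denotes the self-intersection in $H^4(S,\Q)\cong\Q$. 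Now $r$, $c_1$ and $\mathrm{ch}_2$ are all additive on the quotients $E_i$ of the filtration, so $\mathrm{ch}_2(F)=\sum_i\mathrm{ch}_2(E_i)=\sum_i r(E_i)\bigl(\tfrac12\mu(E_i)^2-\Delta(E_i)\bigr)$. Dividing by $r(F)$ and inserting this back, the asserted formula becomes equivalent to the identity
\be
\sum_{i=1}^s r(E_i)\,\mu(E_i)^2-r(F)\,\mu(F)^2=\sum_{i=2}^s\frac{r(F_{i-1})\,r(F_i)}{r(E_i)}\bigl(\mu(F_{i-1})-\mu(F_i)\bigr)^2 ,
\ee
which no longer involves $\Delta$ or $c_2$ at all.

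Next I would match this against the elementary identity
\be
\sum_{k=1}^s n_k x_k^2-N_s\,\bar x_s^{\,2}=\sum_{k=2}^s\frac{N_{k-1}\,n_k}{N_k}\,(x_k-\bar x_{k-1})^2 ,
\ee
valid for positive integers $n_k$ and arbitrary $x_k$ in any $\Q$-vector space carrying a quadratic form, with $N_k=n_1+\dots+n_k$ and $\bar x_k=(n_1x_1+\dots+n_kx_k)/N_k$. The translation is: put $n_k=r(E_k)$ and $x_k=\mu(E_k)$, so that additivity of rank and of $c_1$ gives $N_k=r(F_k)$ and $\bar x_k=\mu(F_k)$. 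One then records the elementary relation $\mu(F_{k-1})-\mu(F_k)=-\tfrac{n_k}{N_k}(x_k-\bar x_{k-1})$, obtained from $N_k\bar x_k=N_{k-1}\bar x_{k-1}+n_kx_k$; squaring it and multiplying by $r(F_{k-1})\,r(F_k)/r(E_k)=N_{k-1}N_k/n_k$ shows that the summands on the two right-hand sides agree term by term, so the two displays above are the same statement.

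Finally the elementary identity is proved by induction on $s$. For $s=1$ both sides vanish (here $E_1=F_1$). Writing $S_m$ for the left-hand side truncated at the first $m$ terms, the relation $N_m\bar x_m=N_{m-1}\bar x_{m-1}+n_mx_m$, after expanding $N_m\bar x_m^{\,2}=(N_{m-1}\bar x_{m-1}+n_mx_m)^2/N_m$ and using $N_m-n_m=N_{m-1}$, yields $S_m-S_{m-1}=\tfrac{N_{m-1}n_m}{N_m}(x_m-\bar x_{m-1})^2$, and summing over $m$ telescopes to the claim. There is no genuine obstacle here; the only care needed is bookkeeping --- keeping straight that $\mu$ and $\Delta$ fail to be additive because of the normalisation by $r$, and getting the sign in $\mu(F_{k-1})-\mu(F_k)$ right so that the squared terms line up with the stated formula.
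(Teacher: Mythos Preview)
Your argument is correct. The key rewriting $\Delta(F)=\tfrac12\mu(F)^2-\mathrm{ch}_2(F)/r(F)$ cleanly separates the additive piece $\mathrm{ch}_2$ from the non-additive one, and the remaining identity in the $\mu$'s is indeed the weighted variance decomposition, which your telescoping induction handles correctly. The translation between $(\mu(F_{k-1})-\mu(F_k))^2$ and $(x_k-\bar x_{k-1})^2$ via the factor $n_k/N_k$ is the only place where signs or coefficients could slip, and you have checked it.

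In terms of approach, the paper does essentially the same induction but without the preliminary reduction: it verifies the $s=2$ case directly for $\Delta$ (expanding $\Delta(F)$ in terms of $\Delta(F_1)$, $\Delta(E_2)$ and $(\mu(F_1)-\mu(E_2))^2$), and then iterates on $F_1$. Your route adds one conceptual step --- recognising that once $\mathrm{ch}_2$ is stripped off, what remains is a purely combinatorial statement about weighted averages --- and this buys a cleaner induction that no longer carries $\Delta$ or $c_2$ through the computation. The paper's route is shorter on the page but hides the same telescoping inside ``some straightforward algebra''. Either way the inductive engine is identical: the two-term case plus iteration.
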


\begin{proof}
Consider first the filtration for $s=2$: $0\subset F_1\subset
F_2=F$, such that $\Gamma(F)=\Gamma(F_1)+\Gamma(E_2)$. Application of the definitions
and some straightforward algebra lead to:  
$$
\Delta(F)=\frac{r(F_1)}{r(F)}\Delta(F_1)+\frac{r(E_2)}{r(F)}\Delta(E_2)-\frac{r(F_1)r(E_2)}{2r(F)^2}(\mu(F_1)-\mu(E_2))^2.
$$
Applying this equation iteratively on $F_1$ leads to the
lemma.\footnote{\, Note
that this is different from Ref. \cite{Yoshioka:1996} (Lemma 2.2).}
\end{proof}

The notion of a moduli space for sheaves is only well defined after
the introduction of a stability condition. To this end let $C(S)\in
H^2(S,\mathbb{Z})$ be the ample cone of $S$.

\begin{definition}
Given a choice $J\in C(S)$, a sheaf $F$ is called $\mu$-stable if for every subsheaf $F'$,
$\mu(F')\cdot J <\mu(F)\cdot J$, and $\mu$-semi-stable if for every subsheaf $F'$,
$\mu(F')\cdot J \leq\mu(F)\cdot J$. A wall of marginal stability $W$ is a (codimension
1) subspace of $C(S)$, such that $(\mu(F')-\mu(F))\cdot J=0$, but
$(\mu(F')-\mu(F))\cdot J\neq 0$ away from $W$. 
\end{definition}

Let $S$ be a K\"ahler surface, whose intersection pairing on
$H^2(S,\mathbb{Z})$ has signature $(1,b_2-1)$.  Since at a wall,
$(\mu_2-\mu_1)\cdot J=0$ for $J$ ample,
$(\mu_2-\mu_1)^2<0$. Therefore, the set of filtrations for $F$, with
$\Delta_i\geq 0$ is finite.  

\subsection{Invariants and wall-crossing}
Ref. \cite{Vafa:1994tf} shows that the BPS-invariant of $\CN=4$ gauge
theory on $S$ equals the Euler number (up to a sign) of a suitable compactification of the
instanton moduli space, i.e. the Gieseker-Maruyama moduli space
$\CM_J(\Gamma)$ of semi-stable sheaves on $S$ (with respect
to the ample class $J$). The topological classes $\Gamma$ of
the sheaf are determined by the topological properties of the instanton.
The complex dimension of $\CM_J(\Gamma)$ is given by: 
\be
\dim_{\mathbb{C}} \CM_J(\Gamma)=2r^2\Delta-r^2\chi(\CO_S)+1.\non
\ee
The BPS-invariant $\Omega(\Gamma;J)$ corresponds to the Euler number of $\CM_{J}(\Gamma)$  
\be
\Omega(\Gamma;J)=(-1)^{\dim_\mathbb{C}\CM_J(\Gamma)}\chi(\CM_J(\Gamma)),\non
\ee
if the moduli space is smooth and compact. The mathematical rigorous
definition of the BPS-invariant is more involved if these
conditions are not satisfied \cite{Yoshioka:1995}. 
The rational invariants \cite{Nakajima:2007, Joyce:2008, Kontsevich:2008}
\be
\label{eq:ratinv}
\bar \Omega(\Gamma;J)=\sum_{m|\Gamma} \frac{\Omega(\Gamma/m;J)}{m^2}
\ee
are also particularly useful for our purposes
\cite{Manschot:2010xp}.

To state the changes $\Delta\Omega(\Gamma;J_\CC\to J_{\CC'})$ across walls of marginal
stability, we define the following quantities: 
\be
\label{eq:intprod} 
\left<\Gamma_1,\Gamma_2\right>=r_1r_2(\mu_2-\mu_1)\cdot K_S , \qquad \mathcal{I}(\Gamma_1,\Gamma_2;J)=r_1r_2(\mu_2-\mu_1)\cdot J.
\ee
These definitions follow quite naturally from formulas in physics \cite{Diaconescu:2007bf, Manschot:2009ia,
  Manschot:2010xp}. 

The change $\Delta\Omega(\Gamma_1+\Gamma_2;J_\CC\to J_{\CC'})$, for $\Gamma_1$ and $\Gamma_2$ primitive,
is \cite{Yoshioka:1996,Denef:2007vg}:
\begin{eqnarray}
\label{eq:deltaprimitive}
\Delta\Omega(\Gamma_1+\Gamma_2; J_\CC\to
J_{\CC'})&=&\half \left( \sgn(\mathcal{I}(\Gamma_1,\Gamma_2;J_{\CC'}))- \sgn(\mathcal{I}(\Gamma_1,\Gamma_2;J_\CC)) \right)\\
&&\times\, (-1)^{\left<\Gamma_1,\Gamma_2\right>}\left<\Gamma_1,\Gamma_2\right>
\Omega(\Gamma_1;J_{W_\CC})\,\Omega(\Gamma_2;J_{W_\CC}),\non
\end{eqnarray}
with 
$$
\sgn(x)=\left\{ \begin{array}{cl} 1, & x>0, \\ 0, & x=0, \\ -1, & x<0.\end{array} \right.
$$
The subscript ${W_\CC}$ in $J_{W_\CC}$ refers to a point in $\CC$
which is sufficiently close to the wall $W$, such that no wall is
crossed for the constituents between the wall and $J_{W_\CC}$. The wall
is independent of $c_2$, and therefore a sum over $c_2$ appears in the
next section.   

For the computation of the invariants for rank 3, one also needs the
semi-primitive wall-crossing formula \cite{Denef:2007vg}:  
\begin{eqnarray}
\label{eq:semiprim}
\Delta\Omega(2\Gamma_1+\Gamma_2;J_\CC\to
J_{\CC'})&=&\half \left( \sgn(\mathcal{I}(\Gamma_1,\Gamma_2;J_{\CC'}))- \sgn(\mathcal{I}(\Gamma_1,\Gamma_2;J_\CC)) \right)\non\\
&&\times \left\{-2\left<\Gamma_1, \Gamma_2 \right>
\Omega(2\Gamma_1;J_{W_\CC})\,\Omega(\Gamma_2;J_{W_\CC}) \right.\non\\
&&\quad +(-1)^{\left<\Gamma_1,\Gamma_2\right>}\left<\Gamma_1,\Gamma_2\right>\Omega(\Gamma_1;J_{W_\CC})\,\Omega(\Gamma_1+\Gamma_2; J_{W_\CC}) \\   
&&\quad \left. +\half \left<\Gamma_1, \Gamma_2\right>\Omega(\Gamma_2;J_{W_\CC})\,\Omega(\Gamma_1;J_{W_\CC})\,(\left<\Gamma_1, \Gamma_2\right>\Omega(\Gamma_1; J_{W_\CC})-1)\right\}.\non
\end{eqnarray}

Define the generating function for $\bar \Omega(\Gamma;J)$:
\be
\label{eq:genfunction}
h_{r,c_1}(\tau;S,J):=\sum_{c_2} \bar \Omega(\Gamma;J)\,q^{r\Delta-\frac{r\chi(S)}{24}}.
\ee
Twisting a sheaf by a line bundle gives an isomorphism of moduli
spaces, which implies $h_{r,c_1+\mathbf{k}}=h_{r,c_1}$ if
$\mathbf{k}\in H_2(S,r\mathbb{Z})$. It is therefore sufficient to
determine $h_{r,c_1}$ only for $c_1 \mod r$. Explicit computation of $h_{r,c_1}(\tau;S,J)$ is typically complicated. A generic
result exists just for $r=1$ \cite{Gottsche:1990}:
\be
\label{eq:genr=1}
h_{1,c_1}(\tau;S)=\frac{1}{\eta(\tau)^{\chi(S)}},
\ee 
with $\eta(\tau)$ defined in Eq. (\ref{eq:etatheta}). The dependence 
on $J$ could be omitted here, since the moduli space of rank 1 sheaves
does not depend on a choice of ample class. 

The next proposition gives
the universal relation between generating functions for $S$ and its
blow-up $\tilde S$. This appeared first for $r=2$ in
\cite{Yoshioka:1994, Vafa:1994tf} and for general $r$ in
\cite{Yoshioka:1996}. Proofs are given in \cite{Li:1999} for
$r=2$, and \cite{Gottsche:1998} for general $r$. 
\begin{proposition}
\label{prop:blowup}
Let $S$ be a smooth projective surface and $\phi: \tilde S \to S$ the
blow-up at a non-singular point, with $C_1$ the exceptional
divisor of $\phi$. Let $J\in C(S)$, $r$, and $c_1$ such that
$\gcd(r,c_1\cdot J)=1$. The generating functions $h_{r,c_1}(\tau;S,J)$ and
$h_{r,c_1}(\tau;\tilde S,J)$ are then related by the ``blow-up
formula'':
\be
h_{r,\phi^* c_1-kC_1}(\tau;\tilde S, \phi^* J)=B_{r,k}(\tau)\, h_{r,c_1}(\tau;S,J),\non
\ee
with
\be
B_{r,k}(\tau)=\frac{(-1)^{(r-1)k}}{\eta(\tau)^r}\sum_{\sum_{i=1}^ra_i=0 \atop a_i \in \mathbb{Z}+\frac{k}{r}} q^{-\sum_{i<j}a_ia_j}.\non 
\ee
\end{proposition}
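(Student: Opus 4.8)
The plan is to compare the Gieseker moduli spaces on $\tilde S$ and $S$ directly through the blow-down $\phi$, writing every stable sheaf on $\tilde S$ with the prescribed invariants as a sheaf modification, along the exceptional curve $C_1\cong\bP^1$, of the pullback of a stable sheaf on $S$, and then computing the Euler number of the family of such modifications. Since $C_1^2=-1$ the class $\phi^*J$ is only nef, so one works with Gieseker stability relative to $\phi^*J$ (equivalently, with the ample class $\phi^*J-\varepsilon C_1$ for $0<\varepsilon\ll1$); because $\gcd(r,c_1\cdot J)=1$ and $(\phi^*c_1-kC_1)\cdot\phi^*J=c_1\cdot J$, semistability equals stability on both surfaces, so the moduli spaces are fine and $\bar\Omega$ is a (signed) Euler number of a smooth variety. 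One first checks that $\phi_*$ sends a stable sheaf $F$ on $\tilde S$ with $\Gamma(F)=(r,\phi^*c_1-kC_1,\mathrm{ch}_2)$ to a $J$-stable sheaf $\phi_*F$ on $S$ with first Chern class $c_1$, and that the adjunction map $\phi^*\phi_*F\to F$ has kernel and cokernel supported on $C_1$; this realizes $F$ as obtained from $\phi^*(\text{stable on }S)\otimes\CO(-kC_1)$ by successive elementary transformations along $C_1$.

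First I would stratify the moduli space on $\tilde S$ by the restriction data along $C_1$: the Grothendieck splitting type $(F|_{C_1})^{\vee\vee}\cong\bigoplus_{i=1}^r\CO_{\bP^1}(a_i)$ together with the finer jumping data of the successive elementary transformations. The degree constraint $c_1(F)\cdot C_1=k$ fixes the sum of the degrees, and after separating off the part pulled back from $S$ the reduced exponents satisfy $\sum_i a_i=0$ with $a_i$ in the coset $\Z+\tfrac{k}{r}$ — exactly the index set of $B_{r,k}$. Applying the discriminant formula of the Lemma to this filtration of $F$ gives
\[
r\,\Delta(F)=r\,\Delta(\phi_*F)\;-\!\!\sum_{i<j}a_ia_j\;+\;\ell ,
\]
where $\ell\ge0$ is the length of the residual $C_1$-supported torsion and the middle term is the positive-definite form $-\sum_{i<j}a_ia_j=\tfrac12\sum_i a_i^2$ appearing in $B_{r,k}$; the identification of this expression with the symmetric form in the Lemma is an elementary $A_{r-1}$-lattice computation. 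Hence the weight $q^{\,r\Delta-\frac{r\chi(\tilde S)}{24}}$ in \eqref{eq:genfunction} factorizes as $q^{\,r\Delta(\phi_*F)-\frac{r\chi(S)}{24}}$ times $q^{-\sum_{i<j}a_ia_j}$ times $q^{-\frac{r}{24}}$ times the weight of the residual torsion, the shift $-\tfrac{r}{24}$ absorbing $\chi(\tilde S)-\chi(S)=1$.

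Next I would evaluate each stratum. Over the moduli space of stable sheaves on $S$ (with the appropriate $\mathrm{ch}_2$), the locus of sheaves on $\tilde S$ with a prescribed chain of elementary transformations along $C_1\cong\bP^1$ is, Zariski-locally on the base, an iterated tower of projective and affine-space bundles; this uses base change together with the vanishing of the relevant higher $\mathrm{Ext}$'s supported on $C_1$, which is where nefness of $K^{-1}_{\tilde S}$ enters to keep the moduli smooth. Such bundles contribute trivially to the normalized Euler-number generating function, except for the enumeration of the residual torsion sheaves on the infinitesimal thickenings of $C_1$, whose generating function is that of $r$-colored partitions, $\prod_{n\ge1}(1-q^n)^{-r}$; together with the shift this assembles into $\eta(\tau)^{-r}$. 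Summing the monomials $q^{-\sum_{i<j}a_ia_j}$ over all admissible $(a_i)$ reproduces the theta-type sum of $B_{r,k}(\tau)$, and the overall sign $(-1)^{(r-1)k}$ is fixed by comparing the dimensions of the two moduli spaces through $\dim_\C\CM_J(\Gamma)=2r^2\Delta-r^2\chi(\CO_S)+1$ (which feeds into the sign in $\Omega$); passing from $\Omega$ to $\bar\Omega$ via \eqref{eq:ratinv} is harmless because the construction is compatible with the divisibility $m\mid\Gamma$.

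The main obstacle is the middle step — showing that the stratification is geometrically well behaved: that $\phi_*$ really lands in the stable locus, that each stratum is a genuine locally trivial fibration over a moduli space on $S$ (not merely a variety with the expected Euler number), and that the bookkeeping of $\mathrm{ch}_2$ across the elementary transformations matches $-\sum_{i<j}a_ia_j$ on the nose. The first two points require vanishing theorems and the structure theory of torsion-free sheaves on a surface with $K^{-1}$ nef; the last is Grothendieck's classification of bundles on $\bP^1$ combined with the Lemma above. These are precisely the technical points carried out in \cite{Li:1999} for $r=2$ and in \cite{Gottsche:1998} for general $r$.
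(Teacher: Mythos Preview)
The paper does not prove this proposition at all: it simply states the result and refers to \cite{Li:1999} for $r=2$ and \cite{Gottsche:1998} for general $r$. So there is no ``paper's own proof'' to compare against; your sketch in fact supplies considerably more than the paper does.

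That said, your outline is the right one and matches the strategy in the cited references. The key ingredients --- stratifying $\CM_{\phi^*J}(\tilde S)$ by the splitting type of $F|_{C_1}$, pushing down via $\phi_*$ to land in the stable locus on $S$ (this is where $\gcd(r,c_1\cdot J)=1$ is used), identifying each stratum as a locally trivial fibration whose fibre Euler characteristic assembles into the theta sum and the $\eta^{-r}$, and reading off the sign from the dimension formula --- are all correctly identified. Your use of the Lemma on $\Delta(F)$ to account for the quadratic form $-\sum_{i<j}a_ia_j$ is exactly how the $q$-exponent bookkeeping goes. The honest caveats you flag at the end (that $\phi_*$ preserves stability, that the strata are genuine bundles, and that the $\mathrm{ch}_2$ bookkeeping matches) are indeed the substantive content of \cite{Gottsche:1998}, and you are right not to pretend they are trivial. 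One small point: the passage from $\Omega$ to $\bar\Omega$ is actually vacuous here, since $\gcd(r,c_1\cdot J)=1$ forces $\Gamma$ to be primitive, so no sum over $m$ occurs.
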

The factor $(-1)^{(r-1)k}$ is a consequence of the relative sign between the
BPS-invariant and the Euler number. The two relevant cases for this article are $r=2,3$:
\be
B_{2,k}(\tau)=(-1)^k\frac{\sum_{n\in \mathbb{Z}+k/2} q^{n^2}}{\eta(\tau)^2},\qquad B_{3,k}(\tau)=\frac{\sum_{m,n \in \mathbb{Z}+k/3} q^{m^2+n^2+mn}}{\eta(\tau)^3}.
\ee

\subsection{Flow trees}
\label{subsec:flowtrees}
This subsection gives a brief introduction to flow trees, since the
computations in the next sections are inspired by it. More information
can be found in Refs. \cite{Denef:2000nb, Denef:2007vg}. See Ref.
\cite{Manschot:2010xp} for a discussion which is more adapted to the present context.

Flow trees appear in the analysis of D-brane bound
states. D-branes are equivalent to coherent sheaves in the ``infinite
volume limit''. A flow tree is an embedding of a rooted tree $T$ in $C(S)$ (or
more generally, the moduli space), which satisfies a number of
``stability'' conditions. The tree can be parametrized by a nested list,
e.g. $((\Gamma_1,\Gamma_2),\Gamma_3)$, and represents a decomposition  
of the total charge $\Gamma=\sum_{i}\Gamma_i$. The change of $J$ along the edges of the tree,
is determined by the supergravity equations of motion. The endpoints of the
flow tree represent ``elementary'' constituents which do not decay in
$C(S)$, for example rank 1 sheaves. Generically, only in a special
chamber in $C(S)$, the chamber with the attractor point, the total moduli space corresponds to the moduli space
of these elementary constituents. 
 
The existence of a tree as a flow tree is determined by ``stability conditions'' at its vertices. The class  
$J$ lies at a wall for the two merging trees if it is a trivalent vertex. For example, the stability of the subtree  $(\Gamma_1,\Gamma_2)$ in
$((\Gamma_1,\Gamma_2),\Gamma_3)$, is determined at a (specific) point of the wall for $\Gamma_1+\Gamma_2$ and $\Gamma_3$. If
all conditions are satisfied the tree does exist as a flow tree. The attractor flow conjecture states that the ``BPS Hilbert
space'' is partitioned by flow trees \cite{Andriyash:2010qv,
  Denef:2000nb}. This implies that the BPS-index can be computed in
principle by enumerating flow trees, once the BPS-indices of the
endpoints are known \cite{Denef:2007vg}.

One of the advantages of flow trees is that they give an algorithmic
procedure to test for the stability of a composite object at a given
point in moduli space. A simplifying feature is that they do not distinguish between subobjects and quotients, in 
contrast to the stratification of the set of sheaves using
(Harder-Narasimhan) filtrations. 

Small changes are necessary to utilize flow trees in the present
context, since the manifold is a surface instead of a 3-dimensional
Calabi-Yau threefold. One difference is the choice of the boundary of $C(\mathbb{\tilde P}^2)$ as reference
point in the moduli space, instead of the attractor point. For
$\mathbb{\tilde P}^2$ one does not need to
solve for $J$ along the edges, not even for the flow trees with
3 centers, since a wall in $C(\mathbb{\tilde P}^2)$ is only a single
point (projectively). With these observations, it is not difficult to
realize that the generating functions for the (refined) BPS-invariants
in Sections \ref{sec:euler} and  \ref{sec:betti} can be obtained
either using wall-crossing or enumeration of flow trees.  

\section{Euler numbers}
\label{sec:euler}

This section computes the generating function of Euler numbers of the
moduli spaces of semi-stable sheaves of rank 2 and 3 on $\mathbb{\tilde P}^2$ and
$\bP^2$. First, some rudiments of ruled surfaces are reviewed.  

\subsection{Some properties of ruled surfaces}
A ruled surface is a surface $\Sigma$ together with a surjective
morphism $\pi: \Sigma\to C$ to a curve $C$, such that the fibre $\Sigma_y$
is isomorphic to $\mathbb{P}^1$ for every point $y\in C$. Let $f$ be the
fibre of $\pi$, then
$H_2(\Sigma,\mathbb{Z})=\mathbb{Z}C\oplus\mathbb{Z}f$, with
intersection numbers $C^2=-e$, $f^2=0$ and $C\cdot f=1$. The canonical class is
$K_\Sigma=-2C+(2g-2-e)f$. The holomorphic Euler characteristic
$\chi(\CO_\Sigma)$ is for a ruled surface $1-g$. An ample
divisor is parametrized as $J_{m,n}=m(C+ef)+nf$ with $m,n\geq 1$.  

The blow-up $\phi: \mathbb{\tilde P}^2\to \mathbb{P}^2$ of the
projective plane $\mathbb{P}^2$ at a point is equal to the rationally
ruled surface with $(g,e)=(0,1)$. The exceptional divisor of $\phi$ is $C$, and
the hyperplane $H$ of $\bP^2$ equals $\phi^*(C+f)$. The remainder of
this article restricts to the case $(g,e)=(0,1)$, although many
results are easily generalized to generic $(g,e)$.

\subsection{Rank 2}
Our aim is to compute the generating function
$h_{2,c_1}(\tau;\mathbb{\tilde P}^2,J)$ \footnote{\, Since almost all generating
  series in this section are for $\mathbb{\tilde P}^2$, it is omitted
  from the arguments of $h_{r,c_1}$ in the following.}
defined by Eq. (\ref{eq:genfunction}). To learn about the set of
stable sheaves on $\mathbb{\tilde P}^2$ for $J\in C(S)$, it is
useful to first consider the restriction of the sheaves on $\mathbb{\tilde
  P}^2$ to $f$. Namely the restriction $E_{|f}$ is stable if and only if $E$ is $\mu$-stable for
$J=J_{0,1}$ and in the adjacent chamber \cite{Huybrechts:1996}. However, since every bundle of rank $\geq 2$ on
$\mathbb{P}^1$ is a sum of line bundles \cite{Grothendieck:1957},
there are no stable bundles with $r\geq 2$ on $\bP^1$. Therefore $\Omega(\Gamma;J_{0,1})=0$ for
$\Gamma=(r(F),-C-\alpha f,\mathrm{ch}_2)$ with $r(F)\geq 2$ and
$\alpha=0,1$. The computation for $c_1(E)=-\alpha f$ is more complicated, and is dealt with in the end of this subsection. 

To determine $h_{2,-C-\alpha f}(\tau;J_{m,n})$, one can either change
the polarization from $J_{0,1}$ to $J_{m,n}$ (see Figure \ref{fig:mspace}) 
and keep track of $\Omega(\Gamma;J)$ across the walls, or enumerate the flow trees for $J_{m,n}$. 
\begin{figure}[h!]
\centering
\includegraphics[totalheight=7cm]{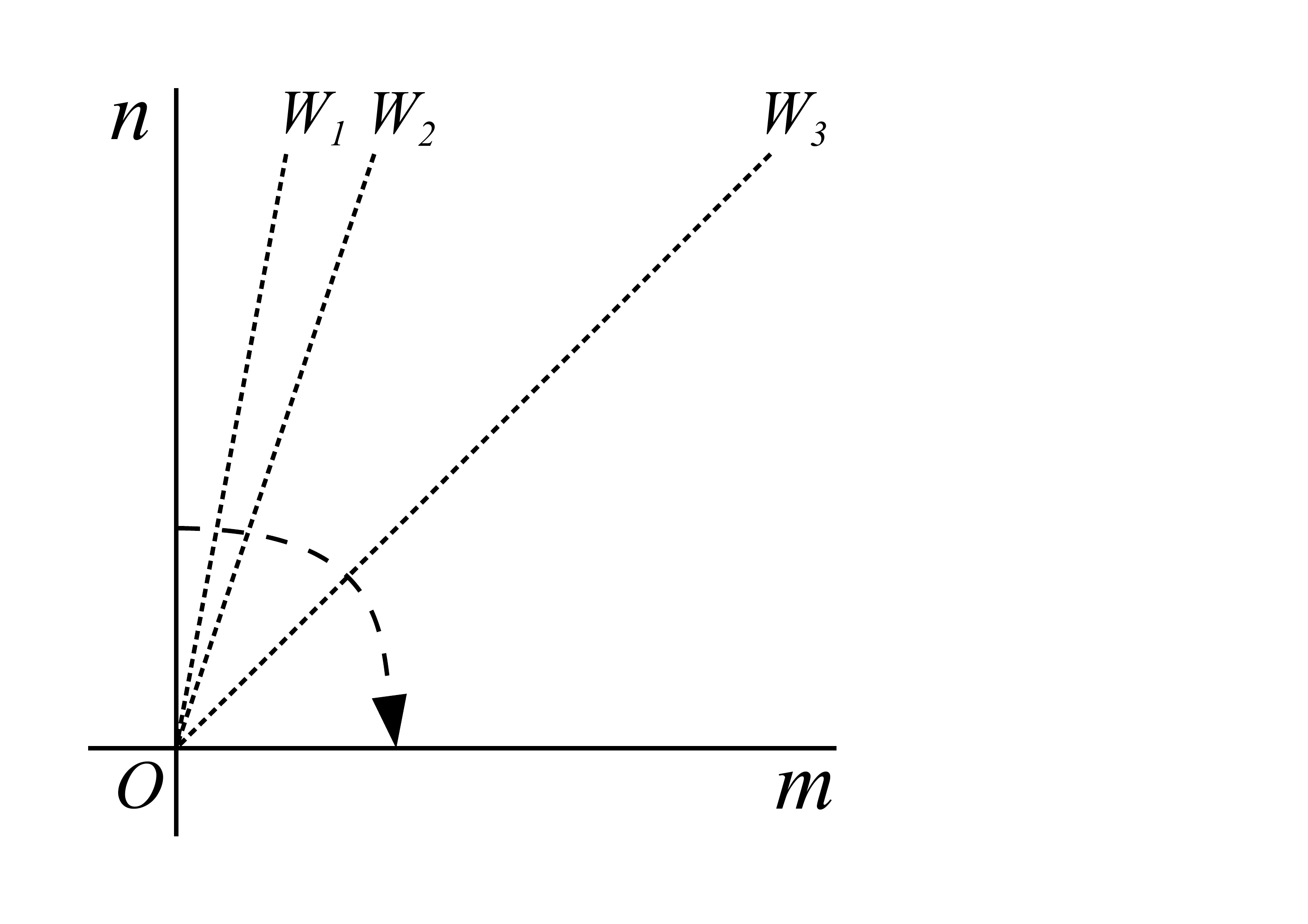}
\caption{The ample cone of $\mathbb{\tilde P}^2$, together with the
  three walls for $\Gamma=(2,-C-f,2)$, namely for $(a,b)=(1,0)$,
  $(2,0)$, $(3,0)$.} 
\label{fig:mspace}
\end{figure}
The only possible
filtrations are $0\subset F_1\subset F$, with $r_i=1$. Therefore the 
primitive wall-crossing formula (\ref{eq:deltaprimitive})
suffices, and the $\Omega(\Gamma_i)$ do not depend on the moduli. In
the following, $c_1(E_2)$ is parametrised by $bC-af$. As is customary for
flow trees, the constituents are treated symmetrically, such that $a,b$ run over $\mathbb{Z}$. Since $r_1=r_2$, this sum needs to be
multiplied by $\half$. The walls are then at $\frac{m}{n}=\frac{2b+1}{2a-\alpha}$, with $m,n\geq 0$. 
See Figure \ref{fig:mspace} for the walls for $\Delta(F)=\frac{9}{4}$, $r(F)=2$.

The various quantities appearing in
$h_{2,c_1}(\tau;J_{m,n})$ become in terms of $a$ and $b$:
\begin{eqnarray}
&\Delta(F)&=\half \Delta_1+\half \Delta_2+\frac{1}{8} (2b+1)^2+\frac{1}{4}(2b+1)(2a-\alpha),\non\\
&\left<\Gamma_1,\Gamma_2\right>&=-(2b+1)+2(2a-\alpha),\\
&\mathcal{I}(\Gamma_1,\Gamma_2;J_{m,n})&=(2b+1)n-(2a-\alpha)m.\non
\end{eqnarray}
It is now straightforward to construct the generating function using (\ref{eq:deltaprimitive}):
\begin{eqnarray}
\label{eq:genfunction2}
h_{2,-C-\alpha f}(\tau;J_{m,n})&=&-\frac{1}{2}\frac{1}{\eta(\tau)^8}\sum_{a,b\in\mathbb{Z}} \half
(\,\sgn((2b+1)n-(2a-\alpha)m)-\sgn(2b+1)\,)\\
&&\times\, (-(2b+1)+2(2a-\alpha))\,q^{\frac{1}{4} (2b+1)^2+\frac{1}{2}(2b+1)(2a-\alpha)}.\non
\end{eqnarray}
The $-$-sign in front is due to
$(-1)^{\left<\Gamma_1,\Gamma_2\right>}$, the $\half$ appears because
$r_1=r_2$, and $\eta(\tau)^{-8}$ arises from the sum over $\Delta_i$ and
(\ref{eq:genr=1}). Ref. \cite{Bringmann:2010sd} proved that for
$J_{m,n}=J_{1,0}$, the generating functions are 
\begin{eqnarray}
\label{eq:genseriesr21}
&&h_{2,-C-f}(\tau;J_{1,0})=3B_{2,0}(\tau) \mathfrak{h}_{1}(\tau)/\eta(\tau)^6,\\
&&h_{2,-C}(\tau;J_{1,0}) = 3B_{2,1}(\tau)
\mathfrak{h}_{0}(\tau)/\eta(\tau)^6, \non
\end{eqnarray}
where $\mathfrak{h}_{i}(\tau)$ are generating functions of the class
numbers (\ref{eq:genclassj}). The half-integer coefficients for
$c_1(F)=-C$ arise because $J_{1,0}$ is a wall for $F$. Application of Proposition \ref{prop:blowup} gives the known generating
functions for $\bP^2$ \cite{Vafa:1994tf}. This gives incidentally also
the correct result for $c_1(F)=0$, even though $\gcd(r(F),c_1(F)\cdot J)\neq 1 $.  

To compute $h_{3,-C-f}(\tau;J_{m,n})$, one needs explicit expressions
for $h_{2,-\alpha f}(\tau;J_{m,n})$, $\alpha=0,1$. Fortunately, it is not necessary to deal with the singularities
in the moduli space explicitly. One can either apply modular transformations or blow-down and blow-up again for $J_{m,n}=J_{1,0}$ and then
apply the wall-crossing formula. One
finds in both cases for $J_{1,0}$:
\begin{eqnarray}
\label{eq:genseriesr22}
&& h_{2,-f}(\tau;J_{1,0})= -3B_{2,1}(\tau)
\mathfrak{h}_{1}(\tau)/\eta(\tau)^6,\non\\
&& h_{2,0}(\tau;J_{1,0})=-3B_{2,0}(\tau)
\mathfrak{h}_{0}(\tau)/\eta(\tau)^6.\non
\end{eqnarray}
The Fourier coefficients $\bar \Omega(\Gamma; J_{1,0})$ of $h_{2,0}(\tau;J_{1,0})$ are not integers,
since $\Gamma$ might be divisible by 2. One finds for the generating
function of $\Omega(\Gamma;J_{1,0})$  using (\ref{eq:ratinv}):
\be
-3B_{2,0}(\tau) \mathfrak{h}_{0}(\tau)/\eta(\tau)^6-1/4\eta(2\tau)^4.\non
\ee
The wall-crossing formula provides now the generating functions for
generic $J\in C(\mathbb{\tilde P}^2)$:
\be
\label{eq:genseriesr2}
h_{2,\beta C-\alpha f}(\tau;J_{m,n})=h_{2,\beta C-\alpha
  f}(\tau;J_{1,0})+\Delta h_{2,\beta C-\alpha f}(\tau;J_{m,n}),
\ee
with
\begin{eqnarray}
\label{eq:Deltah}
&&\Delta h_{2,\beta C-\alpha f}(\tau;J_{m,n})=\non\\
&&\qquad (-)^{\beta}\frac{1}{2}\frac{1}{\eta(\tau)^8}\sum_{a,b\in\mathbb{Z}} \half
(\,\sgn(-(2a-\alpha )\,)-\sgn((2b-\beta )n-(2a-\alpha)m)\,)\\
&&\qquad \times\, (-(2b-\beta)+2(2a-\alpha))\,q^{\frac{1}{4} (2b-\beta)^2+\frac{1}{2}(2b-\beta)(2a-\alpha)}.\non
\end{eqnarray}

\subsection{Rank 3}
Using the results of the previous subsection, the Euler numbers of the
moduli space of stable sheaves with $\Gamma(F)=(3,-C-f,\mathrm{ch}_2)$ can  be computed. This
computation has to deal with two additional complications:
\begin{itemize}
\item[-] semi-primitive wall-crossing is possible for sheaves with $\Gamma(F)=2\Gamma_1+\Gamma_2$, 
\item[-] the BPS-invariants of a constituent with $r=2$ do themselves
  depend on the moduli, and need to be determined sufficiently close to the appropriate wall. 
\end{itemize}

Since no stable sheaves do exist for $c_1(F)=-C-f$, all sheaves
are composed of 2 constituents with rank $r_1=1$ and $r_2=2$, or 3 constituents
with rank $r_i=1$, $i=1,2,3$. Therefore the formulas of Ref. \cite{Manschot:2010xp} for
the enumeration of flow trees with 3 centers are applicable. There it was
explained that the semi-primitive wall-crossing formula for
$2\Gamma_1+\Gamma_2$ simplifies, if 1) the invariants are
evaluated at a point on the wall $J_W$ instead of $J_{W_\CC}$, and 2)  it is written in terms of the rational invariant $\bar \Omega(\Gamma,
J_W)$. With these substitutions, one finds that Eq. (\ref{eq:semiprim}) is equal to: 
\begin{eqnarray}
\Delta\Omega(2\Gamma_1+\Gamma_2;J_\CC \to J_{\CC'} )&=&\half \left(
  \sgn(\mathcal{I}(\Gamma_1,\Gamma_2;J_{\CC'}))-
  \sgn(\mathcal{I}(\Gamma_1,\Gamma_2;J_\CC)) \right)\left<\Gamma_1,\Gamma_2\right>\non\\
&&\times \left( -2\, \bar \Omega(2\Gamma
_1;J_W)\,\Omega(\Gamma_2)+(-1)^{\left<\Gamma_1,\Gamma_2\right>}\,\Omega(\Gamma_1)\,\Omega(\Gamma_1+\Gamma_2;J_W)\right).
\end{eqnarray}
One observes that the extra terms due to semi-primitive wall-crossing are naturally
included into the terms for primitive wall-crossing. 

The Euler numbers can now be obtained by simply implementing the
formulas. Choose again $c_1(E_2)=bC-af$. Then, the walls are at
\be
\frac{m}{n}=\frac{3b+2}{3a-2}, \qquad m,n \geq 0.\non
\ee
For the generating function follows: 
\begin{eqnarray}
\label{eq:genf2}  
h_{3,-C-f}(\tau;J_{m,n})&=&\frac{1}{\eta(\tau)^{4}}\sum_{a,b\in
  \mathbb{Z}}\half (\, \sgn((3b+2)n-(3a-2)m)-\sgn(3b+2)
\,)\non \\
&&\times (-1)^b\,(-(3b+2)+2(3a-2))\,q^{\frac{1}{12}(3b+2)^2+\frac{1}{6}(3b+2)(3a-2)}\\
&& \times h_{2,bC-af}(\tau;J_{|3b+2|,|3a-2|}).\non
\end{eqnarray}
Expansion of the first coefficients gives for $J_{m,n}=J_{1,0}$: 
\be
h_{3,-C-f}(\tau;J_{1,0})=q^{-\frac{5}{6}}\left(3\,q^2
  + 69\,q^{3}+792\,q^4+6345\,q^5+\dots\right).
\ee
One finds with Proposition \ref{prop:blowup}:
\be
\label{eq:euler3p2}
h_{3,-H}(\tau;\mathbb{P}^2)=\frac{h_{3,-C-f}(\tau;J_{1,0})}{B_{3,0}(\tau)},
\ee
which is also equal to $h_{3,H}(\tau;\mathbb{P}^2)$.  
This result agrees with the coefficients given by Corollary 4.10 of
Ref. \cite{weist:2009},\footnote{\, Note that the result of
  Ref. \cite{weist:2009} differs from (\ref{eq:euler3p2}) by
  $\eta(\tau)^{-9}$, since that article considers vector bundles
  instead of sheaves.} and Corollary 4.9 of Ref. \cite{Kool:2009}.

\section{Betti numbers}
\label{sec:betti}
This section computes the Betti numbers of the moduli spaces of
stable sheaves with $\Gamma(F)=(3,-C-f,\mathrm{ch}_2)$ using
wall-crossing for refined (or motivic) invariants
$\Omega(\Gamma,w;J)$. To define these invariants, let
$p(X,s)=\sum_{i=0}^{2\dim_\mathbb{C}(X)}b_is^i$, with $b_i$
the Betti numbers $b_i=\dim H^2(X,\mathbb{Z})$, be the Poincar\'e
polynomial of a compact complex manifold $X$. Then I define the refined
invariant in terms of the Betti numbers by:
\be
\Omega(\Gamma,w;J):=\frac{w^{-\dim_\mathbb{C}\CM_J(\Gamma)}}{w-w^{-1}}\, p(\CM_J(\Gamma),w).\non
\ee
The primitive wall-crossing formula reads for $\Omega(\Gamma,w;J)$ \cite{Yoshioka:1996}:
\begin{eqnarray}
\Delta \Omega(\Gamma,w;J_\CC\to J_{\CC'})&=&-\half \left( \sgn(\mathcal{I}(\Gamma_1,\Gamma_2;J_{\CC'}))- \sgn(\mathcal{I}(\Gamma_1,\Gamma_2;J_\CC)) \right)\non\\
&&\times\left( w^{\left<\Gamma_1,\Gamma_2\right>}- w^{-\left<\Gamma_1,\Gamma_2\right>}\right)\Omega(\Gamma_1,w;J)\,\Omega(\Gamma_2,w;J).\non
\end{eqnarray}

Using the semi-primitive wall-crossing formula for refined invariants
\cite{Dimofte:2009bv}, it becomes clear that the analogue of $\bar
\Omega(\Gamma;J)$ for refined invariants is:
\be
\label{eq:refw}
\bar \Omega(\Gamma,w;J)=\sum_{m|\Gamma} (-1)^{m+1}\frac{\Omega(\Gamma/m,w^m;J)}{m}.\non
\ee
The generating function is naturally defined by:
\be
h_{r,c_1}(z,\tau;S,J)=\sum_{c_2} \bar \Omega(\Gamma,w;J)\, q^{r\Delta(F)-\frac{r\chi(S)}{24}}.
\ee
Note that the power of the denominator in (\ref{eq:refw}) is 1 whereas it was 2 in
Eq. (\ref{eq:ratinv}). This leads to an interesting product formula when an additional sum over the rank is
performed. The generalization of Proposition \ref{prop:blowup} gives \cite{Yoshioka:1996}:
\be
\label{eq:blowupw}
B_{2,k}(z,\tau)=\frac{\sum_{n\in \mathbb{Z}+k/2}
  q^{n^2}w^{2n}}{\eta(\tau)^2},\qquad B_{3,k}(z,\tau)=\frac{\sum_{m,n\in \mathbb{Z}+k/3}q^{m^2+n^2+mn}w^{4m+2n}}{\eta(\tau)^3}.
\ee
The generating function of refined invariants for $\mathbb{\tilde P}^2$ and $r=1$ is:
\be
h_{1,c_1}(z,\tau)=\frac{i}{\theta_1(2z,\tau)\,\eta(\tau)}.\non
\ee

Now the computation is completely analogous to Section
\ref{sec:euler}. The generalization of Eq. (\ref{eq:genfunction2}) is: 
\begin{eqnarray}
\label{eq:genf2w}
h_{2,-C-\alpha f}(z,\tau;J_{m,n})&=&\half \frac{1}{\theta_1(2z,\tau)^2\,\eta(\tau)^2}\sum_{a,b\in
  \mathbb{Z}} \half (\,\sgn((2b+1)n-(2a-\alpha)m)-\sgn(2b+1)\,)\non\\
&&\times \left(w^{- (2b+1)+2(2a-\alpha)}-w^{
    (2b+1)-2(2a-\alpha)} \right)\,q^{\frac{1}{4}
  (2b+1)^2+\frac{1}{2}(2b+1)(2a-\alpha)}. 
\end{eqnarray}
This gives for $J_{m,n}=J_{1,0}$:
\be
\begin{array}{ll}
h_{2,-C-f}(z,\tau;J_{1,0}) & = -B_{2,0}(z,\tau)\, g_{1}(z,\tau)/\theta_1(2z,\tau)^2, \non\\
h_{2,-C}(z,\tau;J_{1,0}) & = -B_{2,1}(z,\tau)\,g_0(z,\tau)/\theta_1(2z,\tau)^2, \\
h_{2,-f}(z,\tau;J_{1,0}) & = -B_{2,1}(z,\tau) \, g_{1}(z,\tau) / \theta_1(2z,\tau)^2, \non \\
h_{2,0}(z,\tau;J_{1,0}) & = -B_{2,0}(z,\tau) \,g_0(z,\tau)/\theta_1(2z,\tau)^2.\non
\end{array}
\ee
The invariants for generic $J_{m,n}$ are obtained using the 
generalization of $\Delta h_{2,\beta C-\alpha a}(\tau; J_{m,n})$
(\ref{eq:Deltah}). This gives for rank 3: 
\begin{eqnarray}
\label{eq:h3Cf}
  h_{3,-C-f}(z,\tau;J_{m,n})&=&-\frac{i}{\theta_1(2z,\tau)\,\eta(\tau)}\sum_{a,b\in \mathbb{Z}}\half (\, \sgn((3b+2)n-(3a-2)m)-\sgn(3b+2)
\,)\non \\
&&\times \left(w^{- (3b+2)+2(3a-2)}-w^{ (3b+2)-2(3a-2)} \right)\,q^{\frac{1}{12}(3b+2)^2+\frac{1}{6}(3b+2)(3a-2)}\\
&& \times\, h_{2,bC-af}(z,\tau;J_{|3b+2|,|3a-2|}).\non
\end{eqnarray}

With Eq. (\ref{eq:blowupw}) for rank 3, the final result for $\bP^2$ is:
\be
\label{eq:r3w}
h_{3,-H}(z,\tau;\bP^2)=\frac{h_{3,-C-f}(z,\tau;J_{1,0})}{B_{3,0}(z,\tau)}. 
\ee
The Betti numbers for $2\leq c_2\leq 6$ are presented in Table
\ref{tab:betti}. The first three lines agree with the three
Poincar\'e polynomials presented by Yoshioka \cite{Yoshioka:1996}.

\begin{table}[h!]
\begin{tabular}{lrrrrrrrrrrrrrrr}
$c_2$ & $b_0$ & $b_2$ & $b_4$ & $b_6$ & $b_8$ & $b_{10}$ & $b_{12}$ & $b_{14}$
& $b_{16}$ & $b_{18}$ & $b_{20}$ & $b_{22}$ & $b_{24}$ & $b_{26}$ & $\chi$ \\
\hline
2 & 1 & 1 & & & & & & & & & & & & & 3 \\
3 & 1 & 2 & 5 & 8 & 10 & & & & & & & & & & 42  \\
4 & 1 & 2 & 6 & 12 & 24  & 38 & 54 & 59 & &&&&& & 333\\
5 & 1 & 2 & 6 & 13 & 28 & 52 & 94 &  149 & 217 & 273 & 298 & && & 1968\\
6 & 1 & 2 & 6 & 13 & 29 & 56 & 108 & 189 & 322 & 505 & 744 & 992 &
1200 & 1275 & 9609 \\
&
\end{tabular}
\caption{The Betti numbers $b_n$ (with $n\leq
  \dim_\mathbb{C} \mathcal{M}$) and the Euler number $\chi$ of the moduli spaces of stable sheaves
  on $\bP^2$ with $r=3$, $c_1=-H$, and $2\leq c_2\leq 6$.}  
\label{tab:betti}
\end{table}
  
Note that Eq. (\ref{eq:r3w}) is rather compact and
expressed in terms of modular functions. Electric-magnetic duality suggests that
$h_{3,-H}(z,\tau;\bP^2)$ exhibits modular transformation properties. Indeed, one
observes a convergent sum over a subset of an indefinite lattice of signature
$(2,2)$, when one substitutes the explicit expression for $h_{2,bC-af}(z,\tau;J_{|3b+2|,|3a-2|})$ in Eq. (\ref{eq:h3Cf}). Similar sums over lattices of signature $(n,1)$
appeared earlier in the literature for rank 2 sheaves \cite{Gottsche:1996,
  Gottsche:1998}, which can also be seen from Eq. (\ref{eq:genf2w}). A detailed discussion
of the modular properties of $h_{3,-H}(z,\tau;\bP^2)$ and the
computation of $h_{3,0}(z,\tau;\mathbb{P}^2)$ will appear in a future
article \cite{Bringmann:2011}.

\appendix

\section{Modular functions}
\label{app:modfunctions}

This appendix lists various modular functions, which appear in the
generating functions in the main text. Define $q:=e^{2\pi i \tau}$,
$w:= e^{2\pi i z}$, with $\tau \in \mathbb{H}$ and $z\in \mathbb{C}$. 
The Dedekind eta and Jacobi theta functions are defined by: 
\begin{eqnarray}
\label{eq:etatheta}
&&\eta(\tau)\quad \,\,:=q^{\frac{1}{24}}\prod_{n=1}^\infty (1-q^n),\non\\
&&\theta_1(z,\tau):=i\sum_{r\in \mathbb{Z}+\frac{1}{2}} (-1)^{r-\frac{1}{2}} q^{\frac{r^2}{2}}w^{r},\\
&&\theta_2(z,\tau):=\sum_{r\in \mathbb{Z}+\frac{1}{2}} q^{r^2/2}w^{r},\non\\
&&\theta_3(z,\tau):=\sum_{n\in\mathbb{Z}}q^{n^2/2}w^n\non.
\end{eqnarray}

Let $H(n)$ be the Hurwitz class number, i.e., the number of equivalence classes of quadratic forms of 
discriminant $-n$, where each class $C$ is counted with multiplicity 1/Aut($C$).
Define the generating functions of the class numbers \cite{zagier:1975}:
\be
\label{eq:genclassj}
\mathfrak{h}_j(\tau):=\sum_{n=0}^\infty
H(4n+3j)\,q^{n+\frac{3j}{4}},\qquad j\in \{0,1 \}.
\ee

Following Ref. \cite{Bringmann:2010sd}, define:
\begin{eqnarray}
g_{0}(z,\tau)&:=&\frac{1}{2}+\frac{q^{-\frac{3}{4}}w^{5}}{\theta_2(2z,2\tau)}\sum_{n\in
\mathbb{Z}}\frac{q^{n^2+n}w^{-2n}}{1-q^{2n-1}w^4},\\
g_1(z,\tau)&:=&\frac{q^{-\frac{1}{4}}w^3}{\theta_3(2z,2\tau)}\sum_{n\in
\mathbb{Z}}\frac{q^{n^2}w^{-2n}}{1-q^{2n-1}w^4}.\non
\end{eqnarray}

\providecommand{\href}[2]{#2}\begingroup\raggedright\endgroup

\end{document}